 \newtheorem{thm}{Theorem}[section]
       \newtheorem{proposition}[thm]{Proposition}
       \newtheorem{cor}[thm]{Corollary}
    \newtheorem{mydef}[thm]{Definition}
\newcommand{\argmax}{\operatornamewithlimits{argmax}}
\newcommand{\argmin}{\operatornamewithlimits{argmin}}
\title{Minimum Description Length Principle for Maximum Entropy Model Selection}
\author{\IEEEauthorblockN{Gaurav Pandey and Ambedkar Dukkipati}
\IEEEauthorblockA{Department of Computer Science and Automation\\
Indian Institute of Science, Bangalore 560012 \\
Email: \{gaurav.pandey,ad\}.csa.iisc.ernet.in}}
\begin{document}
\maketitle

\begin{abstract}
In this paper, we treat the problem of selecting a maximum entropy
model given various feature subsets and their moments, as a model
selection problem, and present a minimum description length (MDL)
formulation to solve this problem. For this, we derive normalized
maximum likelihood (NML) codelength for these models. Furthermore, we
show that the minimax entropy method is a special case of maximum
entropy model selection, where one assumes that complexity of all the
models are equal. We extend our approach to discriminative maximum
entropy models. We apply our approach to gene selection problem to
select the number of moments for each gene for fixing the model. 
\end{abstract}

\section{Introduction}
 Model selection is central to statistics, and the most popular
 statistical techniques for 
 model selection are Akaike information criterion
 (AIC)~\cite{akaike1974new}, Bayesian information criterion
 (BIC)~\cite{schwarz1978estimating} and minimum description length
 (MDL)~\cite{rissanen1978modeling}.
 The basic idea behind MDL principle
 is to equate compression with finding regularities in the data. Since
 learning often involves finding regularities in the data, hence
 learning can be equated with compression as well. Hence, in MDL, we
 try to find the model that yields the maximum compression for the
 given observations.  

 The first MDL code introduced was the two-part code
 \cite{rissanen1978modeling}. It was shown that the the two-part code
 generalizes maximum entropy
 principle~\cite{feder1986maximum}. However, in the past two decades,
 the normalized maximum likelihood (NML) code, which is a version of
 MDL, has gained popularity among statisticians. This is particularly
 because of its minimax properties, as stated in
 \cite{shtar1987universal}, which the earlier versions of MDL did not
 possess. Efficient methods for computing NML codelengths for mixture
 models have been proposed in \cite{kont,hirai2011efficient}. In both these papers, the aim of model
 selection is to decide the optimum number of clusters in a clustering
 problem. 

  
  In a maximum entropy
  approach to density estimation one has to decide,
  \textit{a priori},  on the amount of `information' (for example,
  number of moments) that should be used from the data to fix the
  model. The minimax entropy principle \cite{zhu1997minimax} states
  that for given sets of features for the data, one should choose the
  set that minimizes the maximum entropy. However, it can easily be
  shown that if there are two feature subsets $\Phi_1$ and $\Phi_2$
  and $\Phi_1 \subset \Phi_2$, the minimax entropy principle will
  always prefer $\Phi_2$ over $\Phi_1$. Hence, though the minimax
  entropy principle is a good technique for choosing among sets of
  features with same cardinality, it cannot decide when the sets of
  features have varying cardinality. 

  In this paper, we study the NML codelength  of maximum
  entropy models. Towards this end, we formulate the problem of
  selecting a maximum entropy model given various feature subsets and
  their moments, as a model selection problem. We derive NML
  codelength for maximum entropy models and show that our approach is a
  generalization of minimax entropy principle. We also compute the NML
  codelength for discriminative maximum entropy models. 
   We apply our approach to gene selection problem for Leukemia data
  set and compare it with minimax entropy method.

  
\section{Preliminaries and Background} \label{sec:background}

  Let $\mathcal{M}$ be a family of probability distributions on the
  sample space $\mathcal{X}$. $\mathcal{X}^n$ denotes the sample
   space of all data samples of size $n$. $\mathbf{x}^n$ denotes an
   element in $\mathcal{X}^n$, where $\mathbf{x} =
   \left(x_1,...,x_d\right)$ is a vector in $\mathcal{X}$.  A two-part
   code encodes the data sample
   $D=\left(\mathbf{x}_1,\mathbf{x}_2,...,\mathbf{x}_n\right) \in
   \mathcal{X}^n$ by first encoding a distribution $p \in 
  \mathcal{M}$ and then the data $D$. The best hypothesis to explain data $D$ is then the one that
  minimizes the sum $L(p)+L(D|p)$ \cite{grunwald2004tutorial}. Thus,
  according to MDL principle 
  \begin{equation}
  L(D)=\underset{p \in \mathcal{M}}{\operatorname{min}} \left[ L(p)+L(D|p)\right] \label{crude} ,
  \end{equation}
  where $L(D)$ is the codelength of the data, $L(p)$ is the codelength of distribution and $L(D|p)$ is the codelength of data given the distribution. 
  $L(D|p)$ is a measure of the error of data $D$ with respect to
  distribution $p$. Hence, if $p$ approximates $D$ well enough,
  $L(D|p)$ should be small and vice versa. By Kraft's inequality,
  there exists a codelength function $L_p$ on $\mathcal{X}^n$ given by 
  $ L_p(D)=-\log(p(D))$.
 
  We an use $L_p(D)$ as
  $L(D|p)$ directly, since it is the unique minimizer of expected
  codelength, when $p$ is indeed the true distribution.
  The normalized maximum likelihood code is one of the several ways
  for constructing $L(p)$.
  To formalize the definition of normalized maximum likelihood, we
  need the following notion of `regret' \cite{grunwald2004tutorial}.  
  \begin{mydef} Let $\mathcal{M}$ be a model on $\mathcal{X}^n$ and
    let $\bar{p}$ be a probability distribution on
    $\mathcal{X}^n$. The regret of $\bar{p}$ with respect to
    $\mathcal{M}$ for data sample $\mathbf{x}^n$ is defined as  
  \begin{equation*} 
  \mathcal{R}(\bar{p})=-\log\bar{p}(\mathbf{x}^n)-\underset{p \in
    \mathcal{M}}{\operatorname{inf}}\left[-\log
    p(\mathbf{x}^n)\right]  \enspace . 
  \end{equation*}
  \end{mydef}
  The regret is nothing but the extra number of bits needed in
  encoding $\mathbf{x}^n$ using $\bar{p}$ instead of the optimal
  distribution for $\mathbf{x}^n$ in $\mathcal{M}$. The worst case
  regret denoted by $\mathcal{R}_{max}$ is defined as the maximum
  regret over all sequences in $\mathcal{X}^n$ 
  \begin{equation} \notag
  \mathcal{R}_{max}(\bar{p})=\underset{\mathbf{x}^n \in \mathcal{X}^n}{\operatorname{sup}}\left[-\log(\bar{p}(\mathbf{x}^n))-\underset{p \in \mathcal{M}}{\operatorname{inf}}[{-\log p(\mathbf{x}^n)}]\right] \enspace .
  \end{equation}
  Our aim is to find the distribution $\bar{p}$ that minimizes the maximum regret. 
  To this end, we define the complexity of a model
  $\mathrm{COMP}(\mathcal{M})$ as 
  \begin{equation}
  \mathrm{COMP}(\mathcal{M})=\log\int_{\mathbf{x}^n \in \mathcal{X}^n} p(\mathbf{x}^n;\hat{\theta}(\mathbf{x}^n))\: \mathrm{d}\mathbf{x}^n \label{comp} \enspace ,
  \end{equation}
where $\hat{\theta}(x^n)$ denote the maximum likelihood estimate (MLE)
of the parameter $\theta$ for the model $\mathcal{M}$ for the data
sample $x^n$. In the above equation and subsequent sections, the
integral is defined subject to existence. Also, we assume that MLE is
well defined for the model $\mathcal{M}$. 
  Furthermore, the error of a model is defined as 
  \begin{equation}
   \mathrm{ERR}(\mathcal{M},\mathbf{x}^n) = \inf_{p \in \mathcal{M}} \left[ -\log(p(\mathbf{x}^n) \right] \enspace . \label{eq:error_def}
  \end{equation}
  The following result is due to \cite{shtar1987universal}.
  \begin{thm}
  If the complexity of a model is  finite, then the minimax regret is
  uniquely achieved by the normalized maximum likelihood distribution
  given by 
  \begin{equation} \notag
  \bar{p}_{nml}(\mathbf{x}^n)=\frac{p(\mathbf{x}^n;\hat{\theta}(\mathbf{x}^n))} {\int_{\mathbf{y}^n \in \mathcal{X}^n} p(\mathbf{y}^n;\hat{\theta}(\mathbf{y}^n)) \: \mathrm{d}\mathbf{y}^n} \enspace .
  \end{equation}
  \end{thm}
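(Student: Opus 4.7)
The plan is to rewrite the regret in a form that makes the optimum transparent, establish that $\bar{p}_{nml}$ equalizes the regret across all data sequences, and then use a normalization argument to rule out any competitor.

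First, I would observe that the regret can be rewritten as
\[
\mathcal{R}(\bar{p},\mathbf{x}^n)=\log\frac{p(\mathbf{x}^n;\hat{\theta}(\mathbf{x}^n))}{\bar{p}(\mathbf{x}^n)},
\]
since the assumption that the MLE is well defined turns the infimum in the definition into $-\log p(\mathbf{x}^n;\hat{\theta}(\mathbf{x}^n))$. Plugging in $\bar{p}=\bar{p}_{nml}$, the numerator cancels the numerator in the definition of $\bar{p}_{nml}$ and one is left with $\mathcal{R}(\bar{p}_{nml},\mathbf{x}^n)=\mathrm{COMP}(\mathcal{M})$ for every $\mathbf{x}^n\in\mathcal{X}^n$. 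This immediately yields $\mathcal{R}_{max}(\bar{p}_{nml})=\mathrm{COMP}(\mathcal{M})$, which is finite by hypothesis; in particular $\bar{p}_{nml}$ is a bona fide probability density on $\mathcal{X}^n$.

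Next, I would argue the minimax property by a pointwise comparison. Let $\bar{q}$ be any probability distribution on $\mathcal{X}^n$ satisfying $\mathcal{R}_{max}(\bar{q})\le\mathrm{COMP}(\mathcal{M})$. Using the rewritten regret, this inequality means
\[
\bar{q}(\mathbf{x}^n)\;\ge\;\frac{p(\mathbf{x}^n;\hat{\theta}(\mathbf{x}^n))}{\int p(\mathbf{y}^n;\hat{\theta}(\mathbf{y}^n))\,\mathrm{d}\mathbf{y}^n}\;=\;\bar{p}_{nml}(\mathbf{x}^n)
\]
for every $\mathbf{x}^n$. Integrating both sides over $\mathcal{X}^n$, the left side equals $1$ because $\bar{q}$ is a probability distribution, and the right side equals $1$ because $\bar{p}_{nml}$ is a probability distribution. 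Hence equality must hold $\bar{q}$-almost everywhere, which forces $\bar{q}=\bar{p}_{nml}$. This simultaneously gives (i) $\mathrm{COMP}(\mathcal{M})$ is a lower bound on the minimax regret, hence $\bar{p}_{nml}$ is a minimizer, and (ii) any minimizer must coincide with $\bar{p}_{nml}$, yielding uniqueness.

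The only subtle point will be the justification that $\bar{p}_{nml}$ is a well-defined probability distribution and that the integration step is legitimate; both follow from the hypothesis $\mathrm{COMP}(\mathcal{M})<\infty$ together with the standing assumption that the MLE $\hat{\theta}(\mathbf{x}^n)$ is measurable in $\mathbf{x}^n$, so that $p(\mathbf{x}^n;\hat{\theta}(\mathbf{x}^n))$ is integrable. Everything else is a one-line application of the monotonicity of the logarithm and of the fact that a nonnegative function that dominates a probability density and itself integrates to $1$ must equal that density almost everywhere.
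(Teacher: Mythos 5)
The paper does not prove this theorem at all --- it is stated as a known result and attributed to Shtar'kov's 1987 paper, so there is no in-paper argument to compare yours against. Your proof is correct and is, in fact, the standard proof of Shtar'kov's minimax theorem: the regret of $\bar{p}_{nml}$ is constant and equal to $\mathrm{COMP}(\mathcal{M})$ (the ``equalizer'' property), and any competitor $\bar{q}$ with worst-case regret at most $\mathrm{COMP}(\mathcal{M})$ must dominate $\bar{p}_{nml}$ pointwise, which together with $\int\bar{q}=\int\bar{p}_{nml}=1$ forces $\bar{q}=\bar{p}_{nml}$ almost everywhere. This simultaneously delivers the lower bound on the minimax regret, attainment, and uniqueness. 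Two trivial polish points: the equality should be stated as holding almost everywhere with respect to the reference measure on $\mathcal{X}^n$ (not ``$\bar{q}$-almost everywhere''), and you implicitly (and correctly) exclude competitors that vanish on a set where $p(\mathbf{x}^n;\hat{\theta}(\mathbf{x}^n))>0$, since their regret is infinite there. Neither affects the validity of the argument.
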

  The corresponding codelength $-\log(p_{nml}(\mathbf{x}^N))$ also
  known as the stochastic complexity of the data sample $\mathbf{x}^n$
  is given by 
  \begin{equation*}
   \mathrm{NML}(\mathcal{M},\mathbf{x}^n) = \mathrm{ERR}(\mathcal{M}, \mathbf{x}^n) + \mathrm{COMP}(\mathcal{M}) \enspace .
  \end{equation*}

\section{Maximum Entropy Model Selection using MDL} \label{sec:maxent}
  \subsection{Problem Definition}
  Let $X$ be a random variable taking values in $\mathcal{X}$.
  Let $\Phi =
   \left\{\phi_1,...,\phi_m\right\}$ be a set of functions of $X$.
  The resultant linear
   family $\mathcal{L}_{\Phi, \mathbf{x}^n}$ is given by the set of all
   probability distributions that satisfy the constraints  
   \begin{equation}
     \int_{\mathbf{x} \in \mathcal{X}} p(\mathbf{x})\phi_k(\mathbf{x}) \: \mathrm{d}\mathbf{x} = \bar{\phi}_k(\mathbf{x}^n), \:\:\:\: 1 \le k \le m \enspace , \label{eq:linear}
   \end{equation}
   where $\bar{\phi}(\mathbf{x}^n)$ is the empirical estimate of $\phi$
   for the data $\mathbf{x}^n$. 
  The resulting maximum entropy model
  $\mathcal{M}_{\Phi_l}$ contains $p \in \Delta(\mathcal{X})$ such that
  \begin{equation}
   p(\mathbf{x}) = \exp\left(-\lambda_{0,l} - \sum_{k=1}^{m_l} \lambda_{k,l} \phi_{k,l}(\mathbf{x})\right) \label{eq:maxent} \enspace ,
  \end{equation}
    where $\Lambda = \left( \lambda_{0,l},...,\lambda_{m_l,l}\right) \in \mathbb{R}^{m_l+1}$. Here, $\lambda_{0,l}$ is the normalizing constant. 

  Given a set of maximum entropy models characterized by their
  function set $\Phi_l, 1 \le l \le r$, we use NML code to choose the
  model that best describes the data.

  \subsection{NML Codelength}
     
  The NML codelength of data $\mathbf{x}^n$ for a given model
  $\mathcal{M}$ is composed of two parts: (i) the error codelength and
  (ii) the complexity of the model. 
  \begin{proposition}
  Error codelength of data sequence $\mathbf{x}^n$ for the maximum entropy model $\mathcal{M}_\Phi$ is n times the maximum entropy of the corresponding linear family $\mathcal{L}_{\Phi,\mathbf{x}^n}$
  \begin{equation}
  \label{eq:error1}
  \mathrm{ERR}(\mathcal{M}_{\Phi},\mathbf{x}^n) = n H(p^*_{\mathbf{x}^n}),
  \end{equation}
  where $p^*_{\mathbf{x}^n}$ is the maximum entropy distribution of
  $\mathcal{L}_{\Phi, \mathbf{x}^n}$ given by \eqref{eq:linear}. 
\end{proposition}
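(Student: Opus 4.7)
The plan is to exploit the well-known duality between maximum entropy estimation over the linear family $\mathcal{L}_{\Phi,\mathbf{x}^n}$ and maximum likelihood estimation over the exponential family $\mathcal{M}_\Phi$ defined in \eqref{eq:maxent}. Recall from \eqref{eq:error_def} that the error codelength is the negative log-likelihood of $\mathbf{x}^n$ evaluated at the MLE in $\mathcal{M}_\Phi$. So the proof reduces to (i) identifying this MLE with the max-entropy distribution $p^*_{\mathbf{x}^n}$ and (ii) simplifying the resulting log-likelihood using the moment constraints.

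First I would take any $p \in \mathcal{M}_\Phi$ of the form \eqref{eq:maxent} and compute, for i.i.d.\ data $\mathbf{x}^n = (\mathbf{x}_1,\dots,\mathbf{x}_n)$,
\begin{equation*}
-\log p(\mathbf{x}^n) \;=\; n\lambda_0 + \sum_{k=1}^{m} \lambda_k \sum_{i=1}^n \phi_k(\mathbf{x}_i) \;=\; n\Bigl(\lambda_0 + \sum_{k=1}^{m} \lambda_k\, \bar\phi_k(\mathbf{x}^n)\Bigr).
\end{equation*}
This shows that the log-likelihood depends on $\mathbf{x}^n$ only through the empirical moments $\bar\phi_k(\mathbf{x}^n)$, i.e.\ only through the constraints that define $\mathcal{L}_{\Phi,\mathbf{x}^n}$.

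Next I would invoke (and briefly justify) the classical maximum entropy/maximum likelihood duality: the distribution $p^*_{\mathbf{x}^n}$ that maximizes entropy over $\mathcal{L}_{\Phi,\mathbf{x}^n}$ has the exponential form \eqref{eq:maxent} (this is standard Lagrangian analysis) and therefore lies in $\mathcal{M}_\Phi \cap \mathcal{L}_{\Phi,\mathbf{x}^n}$. For any other $p \in \mathcal{M}_\Phi$, a short computation gives the Pythagorean identity
\begin{equation*}
-\tfrac{1}{n}\log p(\mathbf{x}^n) \;=\; H(p^*_{\mathbf{x}^n}) + D(p^*_{\mathbf{x}^n}\,\|\,p),
\end{equation*}
because both $-\log p(\mathbf{x})$ and $-\log p^*_{\mathbf{x}^n}(\mathbf{x})$ are affine in $\phi_k(\mathbf{x})$ and $p^*_{\mathbf{x}^n}$ satisfies the linear constraints \eqref{eq:linear} with the empirical moments. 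Since $D(p^*_{\mathbf{x}^n}\,\|\,p) \ge 0$ with equality iff $p = p^*_{\mathbf{x}^n}$, the infimum in \eqref{eq:error_def} is attained at $p = p^*_{\mathbf{x}^n}$, giving $\mathrm{ERR}(\mathcal{M}_\Phi,\mathbf{x}^n) = n H(p^*_{\mathbf{x}^n})$.

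The main obstacle is establishing the max-entropy/max-likelihood identification rigorously under the assumed existence of the MLE, i.e.\ showing $p^*_{\mathbf{x}^n} \in \mathcal{M}_\Phi$ and that it is the unique minimizer. Once the exponential form of $p^*_{\mathbf{x}^n}$ is granted, the rest is a direct substitution: evaluating $H(p^*_{\mathbf{x}^n}) = \lambda_0 + \sum_k \lambda_k \mathbb{E}_{p^*}[\phi_k] = \lambda_0 + \sum_k \lambda_k \bar\phi_k(\mathbf{x}^n)$ via the moment constraints and comparing with the per-sample log-likelihood expression above yields \eqref{eq:error1}.
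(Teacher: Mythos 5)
Your proposal is correct, and its opening computation ($-\log p(\mathbf{x}^n) = n(\lambda_0 + \sum_k \lambda_k \bar\phi_k(\mathbf{x}^n))$, so the likelihood depends on the data only through the empirical moments) is exactly the paper's equation \eqref{eq:Error_lambda}. Where you diverge is in how the infimum is tied to $H(p^*_{\mathbf{x}^n})$. The paper identifies both quantities with the same scalar minimization: it notes that the max-entropy parameters $\Lambda^*$ are the minimizers of $\lambda_0 + \sum_k \lambda_k \bar\phi_k(\mathbf{x}^n)$ (via the maxent/MLE coincidence in \eqref{eq:Lambda_star}), then separately computes $H(p^*_{\mathbf{x}^n}) = \lambda_0^* + \sum_k \lambda_k^*\bar\phi_k(\mathbf{x}^n)$ from the moment constraints and equates the two. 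You instead prove the Pythagorean identity $-\tfrac{1}{n}\log p(\mathbf{x}^n) = H(p^*_{\mathbf{x}^n}) + D(p^*_{\mathbf{x}^n}\Vert p)$ for every $p \in \mathcal{M}_\Phi$ and conclude by nonnegativity of the divergence. Your decomposition is slightly stronger and cleaner: it exhibits the exact gap for an arbitrary $p$, makes the attainment (and uniqueness) of the infimum at $p^*_{\mathbf{x}^n}$ explicit rather than implicit, and does not require separately asserting that the maxent solution is the MLE --- that falls out of the identity. Both arguments rest on the same nontrivial input, namely that $p^*_{\mathbf{x}^n}$ has the exponential form \eqref{eq:maxent} and satisfies the constraints \eqref{eq:linear}; you correctly flag this as the point needing justification, and the paper likewise simply assumes existence of the maximum entropy distribution for the linear family.
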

  \begin{proof}
  First, we compute the error codelength of the data  $\mathbf{x}^n$
  for the model $\mathcal{M}_{\Phi}$. By definition,  
  \begin{equation*}
   \mathrm{ERR}(\mathcal{M}_{\Phi},\mathbf{x}^n) = \inf_{p \in \mathcal{M}_{\Phi}} \left\{ -\log(p(\mathbf{x}^n)) \right\}  \enspace .
  \end{equation*}
 Using definition of $\mathcal{M}_{\Phi}$ from equation \eqref{eq:maxent}, we get
  \begin{align}
   \mathrm{ERR}(\mathcal{M}_{\Phi} & ,\mathbf{x}^n) \notag \\
			 =& \inf_{\Lambda} \left[ -\sum_{i=1}^n \left(-\lambda_0 -\sum_{k=1}^m \lambda_k \phi_k(\mathbf{x}^{(i)}) \right) \right]\notag \\
			 =& \inf_{\Lambda} \left[ n\lambda_0 + \sum_{k=1}^m \lambda_k \sum_{i=1}^n \phi_k(\mathbf{x}^{(i)}) \right]\notag \\
			 =& \inf_{\Lambda}\left[  n\lambda_0 + \sum_{k=1}^m \lambda_k \left(n\bar{\phi}_k(\mathbf{x}^n)\right) \right] \notag \\
			 =& n\left[\inf_{\Lambda} \left( \lambda_0 + \sum_{k=1}^m \lambda_k \bar{\phi}_k(\mathbf{x}^n)\right) \right] \label{eq:Error_lambda} \enspace ,
  \end{align}
  where  $\bar{\phi}(\mathbf{x}^n)$ is the sample estimate of $\phi$ for the data  $\mathbf{x}^n$ and $\Lambda = (\lambda_0 ,..., \lambda_m) \in \mathbb{R}^{m+1}$. 

  Using Lagrange multipliers, it is easy to see that the maximum entropy distribution for the linear family $\mathcal{L}_{\Phi,\mathbf{x}^n}$ has the form
  \begin{equation*}
   p^*(\mathbf{x})=\exp\left(-\lambda_0^*- \sum_{k=1}^m {\lambda_k^*\phi_k(\mathbf{x})}\right) \: \mathrm{for \: all} \: \mathbf{x}\in \mathcal{X} \enspace ,
  \end{equation*}
  for some $\Lambda^* = (\lambda_0^* ,..., \lambda_m^*)$.
  Since maximum entropy distribution always exists for a linear family, the parameters $\Lambda^*$ can be obtained by maximizing the log likelihood function.
  \begin{align}
   \Lambda^* 	=&\argmax_{\Lambda} \left[ \sum_{i=1}^n \left(-\lambda_0 -\sum_{k=1}^m \lambda_k \phi_k(\mathbf{x}^{(i)}) \right)\right] \notag  \\
		=& \argmin_{\Lambda} \left[ \lambda_0 + \sum_{k=1}^m \lambda_k \bar{\phi}_k(\mathbf{x}^n) \right]  \label{eq:Lambda_star} \enspace .
  \end{align}
  where we remove the negative sign to change $\argmax$ to $\argmin$. The notation $\bar{\phi}_k(\mathbf{x}^n)$ is used to denote the empirical mean of $\bar{\phi_k}$ for the data  $\mathbf{x}^n$.

  The corresponding entropy is given by 
  \begin{align}
   H(p^*_{\mathbf{x}^n}) =& -\sum_{\mathbf{x}\in \mathcal{X}} p^*_{\mathbf{x}^n}(\mathbf{x}^{(i)}) \log(p^*_{\mathbf{x}^n}(\mathbf{x}^{(i)})) \notag \\
		=&  \lambda_0^* +\sum_{k=1}^m \lambda_k^* \sum_{\mathbf{x}\in \mathcal{X}} p^*_{\mathbf{x}^n}(\mathbf{x}^{(i)}) \phi_k(\mathbf{x}^{(i)}) \notag \\
		=&  \lambda_0^* +\sum_{k=1}^m \lambda_k^* \bar{\phi}_k(\mathbf{x}^{(i)}) \label{eq:entropy_and_lambda} \enspace ,
  \end{align}
  where the last equality follows from the definition of $\mathcal{L}_{\Phi,\mathbf{x}^n}$ in equation \eqref{eq:linear}. By combining equations \eqref{eq:Lambda_star} and \eqref{eq:entropy_and_lambda} and using the fact that maximum entropy distribution always exists for a linear family, we get
    \begin{equation}
  H(p^*_{\mathbf{x}^n})  = \min_{\Lambda} \left[ \lambda_0 + \sum_{k=1}^m \lambda_k \bar{\phi}_k(\mathbf{x}^n) \right]   \label{eq:entropy_lambda}  \enspace .
  \end{equation}
  By combining equations \eqref{eq:entropy_lambda} and \eqref{eq:Error_lambda}, we get 
  \begin{equation*}
   \mathrm{ERR}(\mathcal{M}_{\Phi}  ,\mathbf{x}^n) = nH(p^*_{\mathbf{x}^n}) \enspace .
  \end{equation*}
  For fixed $\mathbf{x}^n$, the error depends on the function set $\Phi$ through the above equation. As the no. of functions in the function set is increased, the size of the linear family decreases. Hence, the entropy of the maximum entropy distribution also decreases, since we are restricted to search for the maximum entropy distribution in a smaller space. Hence, error of the model decreases.
  \end{proof}

\begin{cor}
 Complexity of the maximum entropy model $\mathcal{M}_\Phi$ is given by
\begin{equation}
\label{eq:comp1}
 \mathrm{COMP}(\mathcal{M}_\Phi) = \log{\int_{\mathbf{y}^n \in \mathcal{X}^n} {\exp(-n H(p^*_{\mathbf{y}^n})) }}\:\mathrm{d}\mathbf{y}^n \enspace ,
\end{equation}
where $p^*_{\mathbf{y}^n}$ is the maximum entropy distribution of
$\mathcal{L}_{\Phi, \mathbf{y}^n}$.
\end{cor}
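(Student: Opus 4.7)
The plan is to show that this corollary follows immediately by plugging the error expression from the preceding proposition into the general definition of model complexity given in \eqref{comp}. Recall that $\mathrm{COMP}(\mathcal{M}) = \log \int_{\mathbf{y}^n \in \mathcal{X}^n} p(\mathbf{y}^n;\hat{\theta}(\mathbf{y}^n)) \, \mathrm{d}\mathbf{y}^n$, so the task reduces to identifying the integrand $p(\mathbf{y}^n;\hat{\theta}(\mathbf{y}^n))$ with $\exp(-nH(p^*_{\mathbf{y}^n}))$.

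The first step is to observe that, by the very definition of the maximum likelihood estimate, we have $-\log p(\mathbf{y}^n;\hat{\theta}(\mathbf{y}^n)) = \inf_{p \in \mathcal{M}_\Phi}[-\log p(\mathbf{y}^n)]$, which is precisely $\mathrm{ERR}(\mathcal{M}_\Phi,\mathbf{y}^n)$ by \eqref{eq:error_def}. The second step is to invoke the preceding proposition, applied with data sample $\mathbf{y}^n$ in place of $\mathbf{x}^n$, to conclude that this error equals $n H(p^*_{\mathbf{y}^n})$, where $p^*_{\mathbf{y}^n}$ is the maximum entropy distribution of the linear family $\mathcal{L}_{\Phi,\mathbf{y}^n}$ associated with the empirical moments of $\mathbf{y}^n$. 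Exponentiating yields $p(\mathbf{y}^n;\hat{\theta}(\mathbf{y}^n)) = \exp(-n H(p^*_{\mathbf{y}^n}))$, and substituting into \eqref{comp} gives \eqref{eq:comp1}.

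The only subtlety worth flagging is the implicit identification of the MLE within $\mathcal{M}_\Phi$ with the maximum entropy distribution of $\mathcal{L}_{\Phi,\mathbf{y}^n}$. This is exactly the duality established in the proof of the preceding proposition, where \eqref{eq:Lambda_star} shows that the parameters $\Lambda^*$ minimizing $\lambda_0 + \sum_k \lambda_k \bar{\phi}_k(\mathbf{y}^n)$ coincide with the log-likelihood maximizers; the paper's standing assumptions that the MLE is well defined and that the maximum entropy distribution exists for every linear family ensure that this identification is unambiguous for almost every $\mathbf{y}^n$. There is no genuine obstacle here — the corollary is a one-line substitution once the preceding proposition is in hand.
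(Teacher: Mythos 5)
Your proposal is correct and follows essentially the same route as the paper: both identify the maximized likelihood $p(\mathbf{y}^n;\hat{\theta}(\mathbf{y}^n)) = \sup_{p \in \mathcal{M}_\Phi} p(\mathbf{y}^n)$ with $\exp(-\mathrm{ERR}(\mathcal{M}_\Phi,\mathbf{y}^n)) = \exp(-nH(p^*_{\mathbf{y}^n}))$ via the preceding proposition and substitute into the definition of $\mathrm{COMP}$ in \eqref{comp}. Your explicit remark on the maximum-likelihood/maximum-entropy duality is a slightly more careful spelling-out of what the paper leaves implicit, but it is the same argument.
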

\begin{proof}
We have
  \begin{equation}
   \max_{p \in \mathcal{M}_{\Phi}} p(\mathbf{y}^n) = \exp( \max_{p \in \mathcal{M}_{\Phi}} \log p(\mathbf{y}^n))   
					   = \exp( -nH(p^*(\mathbf{x}^n)))    \label{eq:comp_entropy} \enspace ,
  \end{equation}
  where we have used the definition of  error in \eqref{eq:error_def} and its relationship with entropy in \eqref{eq:error1} to get the result. By similar arguments as above, it is easy to see that the complexity of the model increases by increase in the number of constraints.
By using \eqref{eq:comp_entropy} and \eqref{eq:comp1} we get the desired result.
\end{proof}
Hence, the NML codelength (also known as stochastic complexity) of $\mathbf{x}^n$ for the model $M_{\Phi}$ is given by
\begin{equation*}
 \mathrm{NML}(\mathcal{M}_\Phi,\mathbf{x}^n) =  n H(p^*_{\mathbf{x}^n}) + \log{\int_{\mathbf{y}^n \in \mathcal{X}^n} \!\!\!\!\!\!\!\!\!\!\!\!\!\! {\exp(-n H(p^*_{\mathbf{y}^n})) d\mathbf{y}^n }}\enspace .
\end{equation*}

\subsection{Generalization Of Minimax Entropy Principle}  

In this section, we show that the presented NML formulation for maximum entropy is a generalization of the minimax entropy principle \cite{zhu1997minimax}, where this principle has been used for feature selection in texture modeling.

Let $\Phi_1,...,\Phi_l$ be sets of functions from $\mathcal{X}$ to the set of real numbers. Corresponding to each set $\Phi_p$, there exists a maximum entropy model $\mathcal{M}_{\Phi_p}$ and vice-versa. 
The MDL principle states that given a set of models for the data, one should choose the model that minimizes the codelength of the data. Here, the codelength that we are interested in is the NML codelength (also known as stochastic complexity). Since, there exists a one-one relationship between the maximum entropy models and the function sets $\Phi_p$, the model selection problem can be reframed as
\begin{equation}
  \hat{\Phi} = \argmin_{\Phi_k, 1 \le k \le p} \left[n H(p^*_{\mathbf{x}^n,k})  \vphantom{\int_x} \right. \nonumber 
	      + \left. \log\!\!{\int_{\mathbf{y}^n \in \mathcal{X}^n}\!\!\!\!\!\!\!\!\!\!\!\!\!\!\! {\exp(-n H(p^*_{\mathbf{y}^n,k})) \:\mathrm{d}\mathbf{y}^n }}\right] \enspace .
\end{equation}
If we assume that all our models have the same complexity, then the second term in R.H.S can be ignored. Since n, the size of data  is a constant, the model selection problem becomes
\begin{align*}
  \hat{\Phi} =& \argmin_{\Phi_k, 1 \le k \le p}  H(p^*_{\mathbf{x}^n,k})   = \argmin_{\Phi_k, 1 \le k \le p} \ \max_{p \in \mathcal{L}_{\Phi_k, \mathbf{x}^n}} \  H(p) \enspace .
\end{align*}
This is the classical minimax entropy principle given in \cite{zhu1997minimax}. Hence, the minimax entropy principle is a special case of the MDL principle where the complexity of all the models are assumed to be the same and the models assumed are the maximum entropy models. 

\section{Discriminative Models} \label{sec:discriminative}
Discriminative methods for classification, model the conditional
probability distribution $p(c|\mathbf{x})$, where $c$ is the class
label for data $x$.  
Maximum entropy based discriminative classification tries to find the
probability distribution with the maximum conditional entropy $H(C|X)$
subject to some constraints \cite{berger1996maximum}, where $C$ is the
class variable. Initially, information is extracted from the data in
the form of empirical means of functions. These empirical values are
then equated to their expected values, thereby forming a set of
constraints. The classification model is constructed by finding the
maximum entropy distribution subject to these sets of constraints. We
use MDL to decide the amount of information to extract from the data
in the form of functions of features. A straightforward application of
this technique is feature selection. 

The maximum entropy discriminative model $\mathcal{M}_\Phi$, where $\Phi = \left\{\phi_1,...,\phi_m\right\}$ is the set of all probability distributions of the form
\begin{equation*}
 p(c|\mathbf{x}) = \frac{\exp(-\sum_{k = 1}^m \lambda_k \phi_k(\mathbf{x}, c))}{\sum_{c \in \mathcal{C}}{\exp(-\sum_{k = 1}^m \lambda_k \phi_k(\mathbf{x}, c))}} \enspace ,
\end{equation*}
Let us denote the denominator in above equation as
$Z(\mathbf{x})$.

Since, we are not interested in modelling $p(\mathbf{x})$, we use the
empirical distribution $\tilde{p}(\mathbf{x})$to approximate
$p(\mathbf{x})$~\cite{berger1996maximum}.
The empirical
distribution $\tilde{p}(\mathbf{x})$ is given by
$\tilde{p}(\mathbf{x}) = \frac{1}{n},\:\:  \mathbf{x}\in \mathbf{x}^n$ and
$\tilde{p}(\mathbf{x}) =0$ otherwise.
Hence, the constraints become
\begin{align}
 \sum_{i=1}^n \sum_{c \in \mathcal{C}} p(c|\mathbf{x}^{(i)}) \phi_k(\mathbf{x}^{(i)}, c) &= \sum_{i=1}^n \phi_k(\mathbf{x}^{(i)}, c^{(i)}) \enspace . \label{eq:linearfamily} 
\end{align}

As discussed in \cite{tabus2003classification}, the sender-receiver
model assumed here is as follows. Both sender and receiver have the
data $\mathbf{x}^n$. The sender is interested in sending the class
labels $c^n$. If he sends the class labels without compression, he
needs to send $n\log(|C|)$ bits. If, however, he uses the data to
compute a probability distribution over the class labels, and then
compress $c^n$ using that distribution, he may get a shorter
codelength for $c^n$. His goal is to minimize this codelength, such
that the receiver can recover the class labels from this code.  
\begin{proposition}
 The Error codelength of $c^n|\mathbf{x}^n$ for the conditional model $\mathcal{M}_\Phi$ is equal to n times the maximum conditional entropy of the model $\mathcal{L}_{\Phi, \mathbf{x}^n, c^n}$.
\end{proposition}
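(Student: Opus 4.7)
The plan is to mirror the structure of Proposition~3.1, adapting each step from the unconditional to the conditional setting. First I would expand the error codelength using the product form $p(c^n\mid\mathbf{x}^n) = \prod_{i=1}^n p(c^{(i)}\mid\mathbf{x}^{(i)})$ and substitute the exponential-family parametrization of $\mathcal{M}_\Phi$. This yields
\begin{equation*}
\mathrm{ERR}(\mathcal{M}_\Phi, c^n\mid \mathbf{x}^n)
= \inf_{\Lambda}\left[ \sum_{k=1}^{m} \lambda_k \sum_{i=1}^n \phi_k(\mathbf{x}^{(i)}, c^{(i)}) + \sum_{i=1}^{n} \log Z(\mathbf{x}^{(i)})\right],
\end{equation*}
which is the discriminative analogue of equation~\eqref{eq:Error_lambda}, with the $\mathbf{x}$-dependent partition function $Z(\mathbf{x}^{(i)})$ taking the role previously played by the single scalar $\lambda_0$.

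Next I would invoke the standard MaxEnt/ML duality for conditional exponential families: the minimizer $\Lambda^*$ of the above expression is precisely the parameter vector of the maximum conditional entropy distribution in $\mathcal{L}_{\Phi,\mathbf{x}^n,c^n}$, because the gradient-zero condition of the infimum is exactly the moment-matching constraint~\eqref{eq:linearfamily}. In parallel with~\eqref{eq:Lambda_star}, the infimum is attained at $\Lambda^*$ and equals
\begin{equation*}
\sum_{k=1}^{m} \lambda^*_k \sum_{i=1}^n \phi_k(\mathbf{x}^{(i)}, c^{(i)}) + \sum_{i=1}^{n}\log Z^*(\mathbf{x}^{(i)}).
\end{equation*}

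Then I would compute $n H(p^*)$, the conditional entropy of the optimal $p^*$ averaged over the empirical $\tilde p(\mathbf{x})=1/n$. Plugging the parametric form into $-\sum_i \sum_c p^*(c\mid\mathbf{x}^{(i)}) \log p^*(c\mid\mathbf{x}^{(i)})$ gives
\begin{equation*}
n H(p^*) = \sum_{k=1}^{m} \lambda^*_k \sum_{i=1}^n \sum_{c} p^*(c\mid\mathbf{x}^{(i)}) \phi_k(\mathbf{x}^{(i)}, c) + \sum_{i=1}^{n}\log Z^*(\mathbf{x}^{(i)}),
\end{equation*}
using $\sum_c p^*(c\mid\mathbf{x}^{(i)})=1$ for the $\log Z^*$ term. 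The constraint~\eqref{eq:linearfamily} then replaces the inner double sum with $\sum_i \phi_k(\mathbf{x}^{(i)}, c^{(i)})$, matching the expression for $\mathrm{ERR}$ term by term.

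The main obstacle is conceptual rather than computational: justifying the MaxEnt/MLE duality in the conditional setting, namely that the minimizer of the negative conditional log-likelihood coincides with the maximizer of the conditional entropy over $\mathcal{L}_{\Phi,\mathbf{x}^n,c^n}$. Unlike the unconditional case in Proposition~3.1, here the normalization is a function $Z(\mathbf{x})$ rather than a scalar, so one must verify that the stationarity conditions in $\lambda_k$ reproduce exactly the empirical moment constraints~\eqref{eq:linearfamily} when the marginal over $\mathbf{x}$ is taken to be $\tilde p$. Once this duality is in place, the remaining arithmetic is essentially a transcription of the argument leading to~\eqref{eq:entropy_lambda}.
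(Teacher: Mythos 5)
Your proposal is correct and follows essentially the same route as the paper: expand the negative conditional log-likelihood into the $\sum_k \lambda_k \sum_i \phi_k(\mathbf{x}^{(i)},c^{(i)}) + \sum_i \log Z_\Lambda(\mathbf{x}^{(i)})$ form, invoke the conditional MaxEnt/MLE duality to identify the minimizer $\Lambda^*$ with the maximum conditional entropy distribution, and then verify that $nH(p^*)$ reproduces the same expression via the moment constraints and normalization of $p^*(c\mid\mathbf{x}^{(i)})$. The duality step you flag as the main obstacle is likewise asserted rather than proved in the paper (with a pointer to the Berger et al.\ maximum entropy framework), so your treatment matches the paper's level of rigor as well as its structure.
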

  \begin{proof}
    Error of the conditional model is given by
    \begin{align}
     &\mathrm{ERR}(\mathcal{M}_\Phi, c^n|\mathbf{x}^n) 	= \inf_{p \in \mathcal{M}_\Phi} -\log{p(c^n|\mathbf{x}^n)} \notag \\
				=& \inf_\Lambda \left[ \sum_{k=1}^m \lambda_k \sum_{i=1}^n \phi_k(\mathbf{x}^{(i)}, c^{(i)})  + \sum_{i=1}^n \log{Z_\Lambda(\mathbf{x}^{(i)})} \right] \label{eq:error_lambda_disc}\!\!\! \enspace .
    \end{align}
 where we have used similar reasoning as in  \eqref{eq:Error_lambda} to get the last statement. 
  Also, the maximum conditional entropy distribution can be obtained by maximizing the corresponding log-likelihood function.
  Hence,
     $p^* = \argmax_{p \in  \mathcal{M}_\Phi} \log{p(c^n|\mathbf{x}^n)} \:\:\: .$
  Correspondingly,
  \begin{equation}
    \Lambda^*\!\!  = \!\!\argmin_\Lambda\!\! \sum_{k=1}^m \!\!\lambda_k \sum_{i=1}^n \phi_k(\mathbf{x}^{(i)}, c^{(i)}) + \sum_{i=1}^n \log{Z_\Lambda(\mathbf{x}^{(i)})} \label{eq:Lambda_star_disc}\!\!\! \enspace .
  \end{equation}

  The corresponding conditional entropy is given by
  \begin{align}
   H(p^*) =& -\int_{\mathbf{x}\in \mathcal{X}} \sum_{c \in \mathcal{C}} p^*(c |\mathbf{x})\tilde{p}(\mathbf{x}) \log p^*(c | x) \notag\\
	  =& \frac{1}{n} \left[ \sum_{k=1}^m \lambda_k^* \left(\sum_{i=1}^n \sum_{c \in \mathcal{C}} p^*(c|\mathbf{x}^{(i)}) \phi_k(\mathbf{x}^{(i)}, c) \right) \right. \notag\\
	  &+ \left. \sum_{i=1}^n \sum_{c \in \mathcal{C}} p^*(c|\mathbf{x}^{(i)}) \log{Z_\Lambda(\mathbf{x}^{(i)})} \right] \nonumber \\
	  =& \frac{1}{n}\!\! \left[\sum_{k=1}^m \lambda_k^*\!\! \left(\sum_{i=1}^n \phi_k(\mathbf{x}^{(i)}, c^{(i)})\right) \right.  
	  \!\!\!+\!\!\! \left. \sum_{i=1}^n   \log{Z_\Lambda(\mathbf{x}^{(i)})}\right] \label{eq:entropy_lambda_disc} \!\!\!\enspace.
  \end{align}

 Here the first equality follows from the definition of conditional entropy as used in \cite{berger1996maximum}. We use the definition of $\tilde{p}$ to convert the integral to a summation. By using \eqref{eq:linearfamily} and the fact that $p(c|\mathbf{x}^{(i)})$ must sum up to 1, we obtain the fourth equality.
 
 Using \eqref{eq:Lambda_star_disc} and \eqref{eq:entropy_lambda_disc}, we obtain
  \begin{equation}
   H(p^*)\! =\! \frac{1}{n} \min_\Lambda \!\!\left[ \sum_{k=1}^m \lambda_k \sum_{i=1}^n \phi_k(\mathbf{x}^{(i)}, c^{(i)}) \right. 
    \left. + \sum_{i=1}^n \log{Z_\Lambda(\mathbf{x}^{(i)})} \right] \label{eq:entropy_lambda_disc1} \!\!\!\!\enspace .
  \end{equation}

 Replacing the above equation in \eqref{eq:error_lambda_disc}, we get the desired result. 
\end{proof}
  
\begin{cor}
 The complexity of the conditional model $\mathcal{M}_\Phi$ is given by
  \begin{equation*}
   \mathrm{COMP}(\mathcal{M}_\Phi) = \sum_{y^n \in \mathcal{C}^n} \exp(-nH(p^*_{y^n})) \enspace .
  \end{equation*}
\end{cor}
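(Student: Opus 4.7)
The strategy is to mirror the proof of the earlier Corollary for the generative maximum entropy model, adapted to the discriminative sender-receiver framing. Since both sender and receiver already have $\mathbf{x}^n$ and only the label sequence $c^n \in \mathcal{C}^n$ is transmitted, the NML code is built over the finite space $\mathcal{C}^n$ instead of over $\mathcal{X}^n$. Consequently, the integral in the definition \eqref{comp} of complexity is replaced by a sum, giving
$$\mathrm{COMP}(\mathcal{M}_\Phi) = \log \sum_{y^n \in \mathcal{C}^n} \max_{p \in \mathcal{M}_\Phi} p(y^n \mid \mathbf{x}^n).$$

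The key step is then to invoke the Proposition just proved, applied to each candidate label sequence $y^n$ in place of $c^n$. That Proposition delivers $\mathrm{ERR}(\mathcal{M}_\Phi, y^n \mid \mathbf{x}^n) = n H(p^*_{y^n})$, where $p^*_{y^n}$ denotes the maximum conditional entropy distribution of the linear family \eqref{eq:linearfamily} in which the empirical moments are formed with $y^n$ as the label sequence. By the definition of error in \eqref{eq:error_def} this is equivalent to $\max_{p \in \mathcal{M}_\Phi} p(y^n \mid \mathbf{x}^n) = \exp\bigl(-n H(p^*_{y^n})\bigr)$, which is the conditional analog of \eqref{eq:comp_entropy}. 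Substituting this identity for each $y^n$ in the sum yields the stated formula.

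The only point that requires care is bookkeeping: for each candidate $y^n$ the parameters $\Lambda^*$ that achieve the supremum must be recomputed using $y^n$ (not the observed $c^n$) in the empirical moment constraints on the right-hand side of \eqref{eq:linearfamily}, so that $p^*_{y^n}$ genuinely depends on the candidate sequence. Once this is kept straight, the derivation is a direct transcription of the generative-case argument with $\int_{\mathcal{X}^n} \mathrm{d}y^n$ replaced by $\sum_{y^n \in \mathcal{C}^n}$, and no further machinery is needed. I do not anticipate any analytic obstacle since $\mathcal{C}^n$ is finite and the maximum conditional entropy distribution is guaranteed to exist for each linear family.
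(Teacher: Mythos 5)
Your proof is correct and follows exactly the route the paper intends: the paper states this corollary without a proof of its own, and your argument is the direct transcription of its proof of the generative-case corollary (the identity \eqref{eq:comp_entropy} combined with the definition \eqref{comp}), with the maximum-likelihood value $\max_{p\in\mathcal{M}_\Phi}p(y^n\mid\mathbf{x}^n)$ rewritten via the preceding Proposition as $\exp(-nH(p^*_{y^n}))$ and the integral over $\mathcal{X}^n$ replaced by a sum over the finite label space $\mathcal{C}^n$. One point in your favour: your derivation correctly yields $\mathrm{COMP}(\mathcal{M}_\Phi)=\log\sum_{y^n\in\mathcal{C}^n}\exp(-nH(p^*_{y^n}))$, whereas the corollary as printed omits the outer $\log$ that appears both in the definition of complexity \eqref{comp} and in the generative analogue \eqref{eq:comp1}, so the discrepancy is a typo in the statement rather than a flaw in your argument.
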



\section{Application to Gene Selection} \label{sec:gene_selection}
  We use gene selection as an example to illustrate discriminative
  model selection for maximum entropy models.  The dataset used is
  Leukemia dataset available publicly at
  http://www.genome.wi.mit.edu. The dataset was also used in
  \cite{tabus2003classification} to illustrate NML model selection for
  discrete regression. The data set consists of two classes: acute
  myeloid leukemia (AML) and acute lymphoblastic leukemia (ALL). There
  are 38 training samples and 34 independent test samples in the
  data. The data consists of 7129 genes. The genes are  preprocessed
  as recommended in \cite{dudoit2002comparison}.  

 Assuming the sender-receiver model discussed above, the sender needs
 38 bits or 26.34 nats in order to send the class labels of training
 data to receiver. If the NML code is used, the sender needs 24.99
 nats. Since the sender and receiver both contain the microarray data,
 the sender can use the microarray data to compress the class labels
 much more than can be obtained wihout the microarray
 data. Specifically, we are interested in finding the genes which
 gives the best compression, or the minimum NML codelength. 

\begin{figure} [ht]
  \centering
  \includegraphics[height=0.2\textwidth, width=0.45\textwidth]{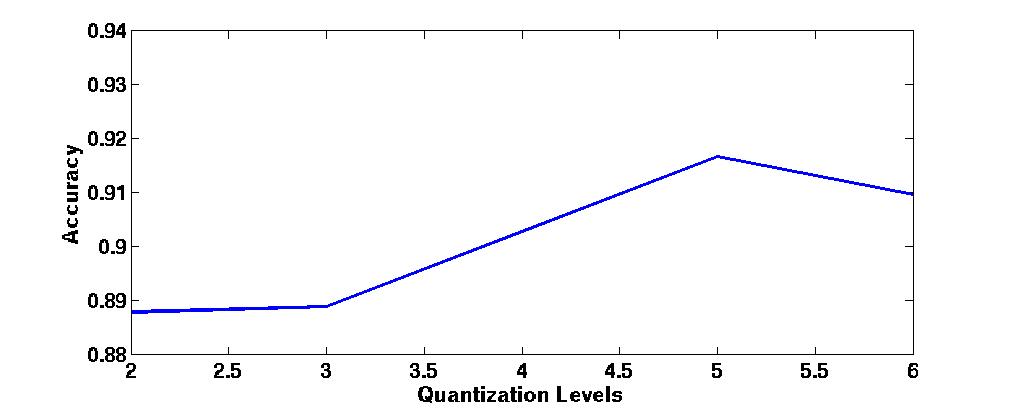}
  \caption{Change in accuracy with quantization level for the top $25$ genes }
  \label{accuracy}
  \end{figure}

\begin{figure} [ht]
  \centering
  \includegraphics[height=0.2\textwidth, width=0.45\textwidth]{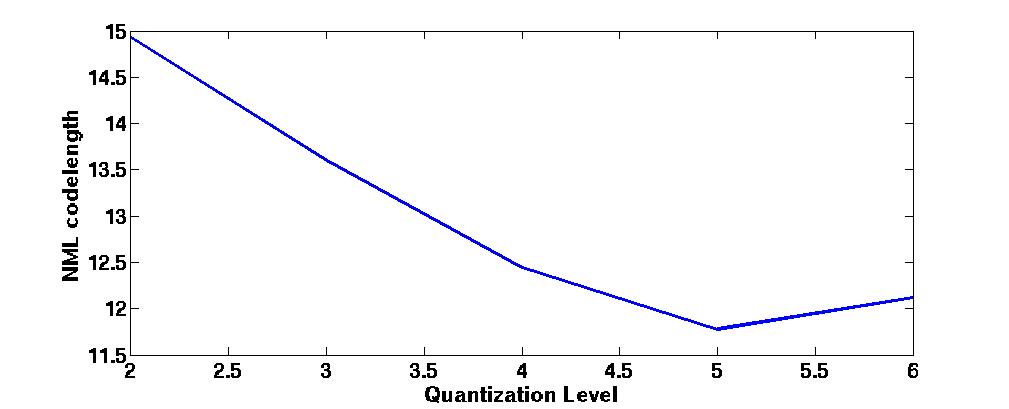}
  \caption{Change in average NML codelength with quantization level for the top $25$ genes}
  \label{nml_codelength}
  \end{figure}
  
  \begin{figure} [ht]
  \centering
  \includegraphics[height=0.2\textwidth, width=0.52\textwidth]{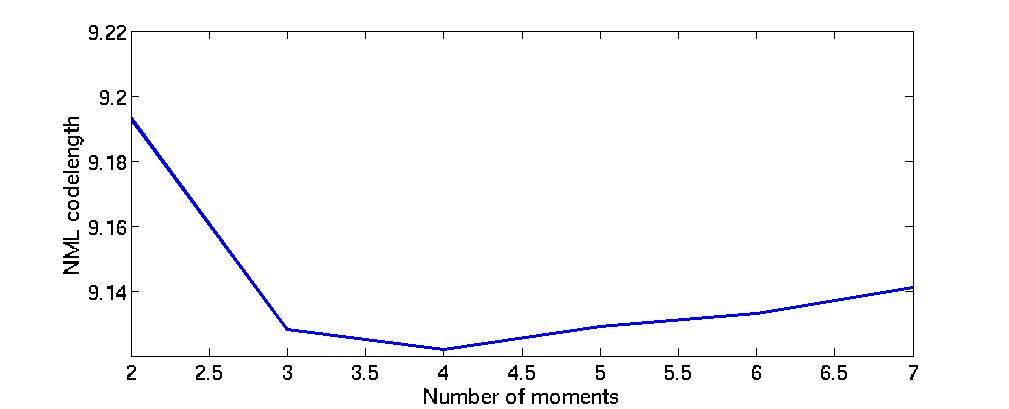}
  \caption{Variation of NML codelength of class labels for gene M16038 with the number of moment constraints m}
  \label{nml_vs_m}
  \end{figure}

 \begin{figure} [ht]
  \centering
  \includegraphics[height=0.3\textwidth, width=0.52\textwidth]{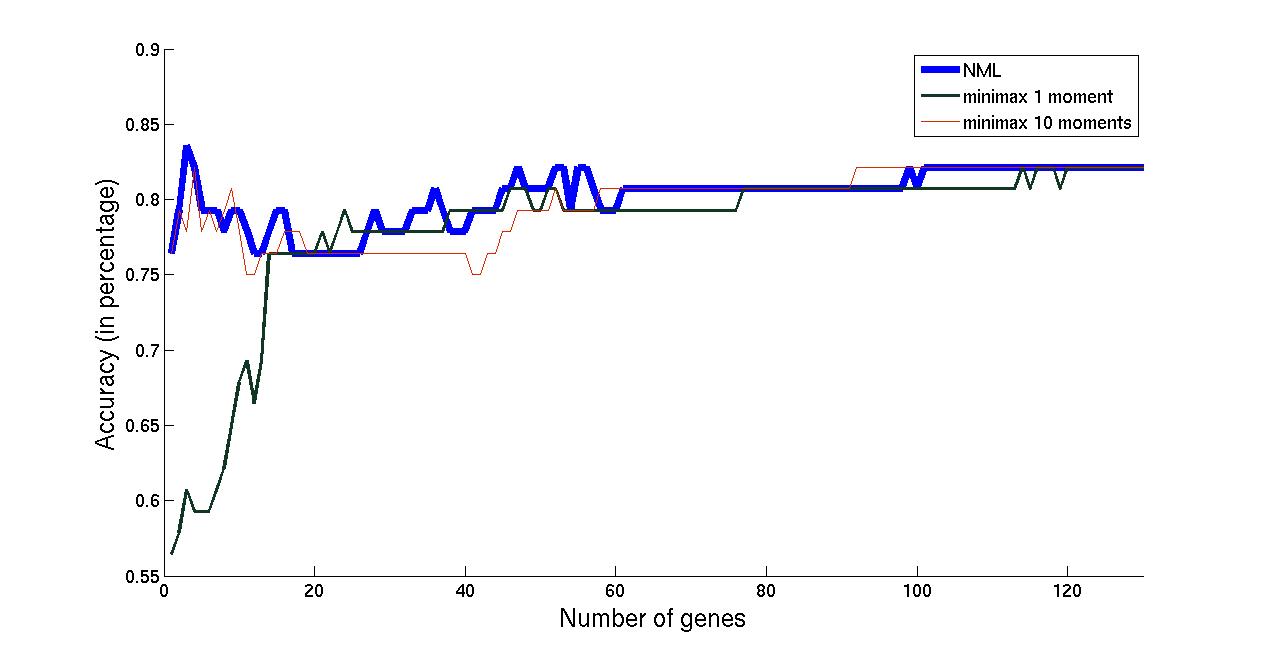}
  \caption{In this figure, we compare the accuracy of various  maximum entropy classifiers based on NML and minimax entropy. The number of genes in the maximum entropy classifier are varied from 1 to 130. Class conditional independence among the features is assumed. Minimax entropy does not help in deciding on the number of moments to use from the data while building a statistical model for the data. Hence, for the graphs `minimax 1 moment' and `minimax 10 moments', we use first moment and first ten moments of the data to build a statistical model. However, the proposed NML based maximum entropy classifier fixes number of moments by MDL principle.}
  \label{classifier}
  \end{figure}

For the purpose of our algorithm, we quantize the genes to various levels. We claim that quantizing a gene reduces the risk of overfitting of the model to data. To support our claim, we have also  plotted the change in accuracy with quantization level in Figure~\ref{accuracy} for the top 25 genes. As can be seen from the graph, increasing quantizaton level from $5$ to $6$ results in a decrease in accuracy. We have also plotted a graph for change in average NML codelength with quantization level for the top 25 genes in Figure~\ref{nml_codelength}. An interesting observation is that the minima of NML codelength coincides exactly with the maxima of accuracy. A similar trend was obsrved when the number of genes were changed.
   
 Hence, we quantize each gene to 5 levels. Other than the advantages of quantization mentioned above, quantization is also necessary for the current problem as the problem of calculating  complexity can become intractable even for moderate n. The constraints that we use are moment constraints, that is $\phi_k(\mathbf{x}) = \mathbf{x}^k, 1 \le k \le m $. We vary the value of m from 1 to 7 to get a sequence of maximum  entropy models. The NML codelength of the class labels is calculated for each such model. The model that results in the minimum NML codelength is selected for each gene. 

 It was observed that for most genes, the NML codelength decreased sharply when m was increased from 1 to 2. The change in values of NML codelength was less noticeable for $m \ge 2$. The variation of NML codelength of class labels for a typical gene are shown in figure~\ref{nml_vs_m}. In order to make the changes in NML codelength more visible, we skip the NML codelength for m=1.    

  Our approach for ranking genes is as follows. For each gene, we select the value of m that gives the minimum NML codelength. We then sort the genes in increasing order of their minimum NML codelengths. The minimum codelength achieved is 8.35 nats, which is much smaller than 24.99 nats achieved without using the microarray data. Since compression is equated with finding regularity according to Minimum Description Length principle, hence, it can be stated that the topmost gene is able to discover a lot of regularity in the data.

  Finally, we use MDL to build a classifier. The amount of information to use for each gene is decided, by using MDL to fix the number of moments. MDL is used to rank the features. Then, we use class conditional independence among features to build a maximum entropy classifier. The number of genes used for the classifier are varied from 1 to 130. The resultant graph is compared with other maximum entropy classifiers in Figure~\ref{classifier}, where the amount of information used per gene is the same for all genes.

\section{Conclusion}
Finding appropriate feature functions and the number of moments is
important to any maximum entropy method. In this paper, we pose this
problem as a model selection problem and develop an
MDL based method to solve this problem. We showed that this approach
generalizes minimax entropy principle of~\cite{zhu1997minimax}. We
derived NML codelength in this respect, and extended it to
discriminative maximum entropy model selection. We tested our proposed
method for gene selection problem to decide on the quantization level
and number of moments for each gene.
Finally, we selected the genes
based on the codelength of the class labels and compared the
simulation results with minimax entropy method. The bottleneck for
using MDL for model selection in discriminative classification is the
computation of complexity. More efficient approximations to calculate
the complexity need to be developed to employ this approach for
problems involving larger data sets.

\end{document}